\newcommand{\Gr}{Gr\"obner }
\def\LM{{\mathrm{LM}}}
\def\LC{{\mathrm{LC}}}
\def\LT{{\mathrm{LT}}}
\def\poly{{\mathrm{poly}}}
\def \LT{{\rm LT}}
\def \LM{{\rm LM}}
\def \LC{{\rm LC}}
\def \lcm{{\rm lcm}}
\def\ri{\rangle}
\def\li{\langle}
\def\LM{{\mathrm{LM}}}
\def\LC{{\mathrm{LC}}}
\def\LT{{\mathrm{LT}}}
\def\L{{\mathcal{L}}}
\def\nf{{\rm{NF}}}
\def\poly{{\rm{poly}}}
\def\anc{{\rm{anc}}}
\def\NF{{\mathrm{NF}}}
\newcommand{\lex}{\mathop{\mathrm{lex}}\nolimits}
\newcommand{\degrevlex}{\mathop{\mathrm{degrevlex}}\nolimits}
\newenvironment{algorithm1}[1]
{
    \begin{center}
        {\bf Algorithm #1} \\
    \begin{tabular}{|p{140mm}|} \hline
}
{
    \\ \hline
    \end{tabular}
    \end{center}
}
\newenvironment{subalgorithm1}[1]
{
    \begin{center}
        {\bf Subalgorithm #1} \\
    \begin{tabular}{|p{140mm}|} \hline
}
{
    \\ \hline
    \end{tabular}
    \end{center}
}
\begin{document}

\mainmatter              
\title{Comprehensive Involutive Systems}
\titlerunning{Comprehensive Involutive Systems}
\author{
Vladimir Gerdt\inst{1} and Amir Hashemi \inst{2}}
\authorrunning{Vladimir Gerdt and Amir Hashemi}
\institute{Laboratory of Information Technologies, Joint Institute for Nuclear Research \\ 141980 Dubna, Russia, \ \
 e-mail: \email{gerdt@jinr.ru}
\and
Department of Mathematical Sciences, Isfahan University of Technology\\
Isfahan, 84156-83111, Iran, \ \
 e-mail: \email{Amir.Hashemi@cc.iut.ac.ir}
}

\maketitle              
\tocauthor{
Vladimir Gerdt (JINR, Dubna)
Amir Hashemi (Isfahan University of Technology, Isfahan)}

\begin{abstract}
In this paper we consider parametric ideals and introduce a notion
of {\em comprehensive involutive system}. This notion plays the
same role in theory of involutive bases as the notion of
comprehensive Gr\"obner system in theory of \Gr bases. Given a
parametric ideal, the space of parameters is decomposed into a
finite set of cells. Each cell yields the corresponding involutive
basis of the ideal for the values of parameters in that cell.
Using the Gerdt--Blinkov algorithm described in \cite{gerdtnew}
for computing involutive bases and also the Montes {\sc DisPGB}
algorithm for computing comprehensive Gr\"obner systems
\cite{monts1}, we present an algorithm for construction of
comprehensive involutive systems. The proposed algorithm has been
implemented in {\tt Maple}, and we provide an illustrative example
showing the step-by-step construction of comprehensive involutive
system by our algorithm.
\end{abstract}


\section{Introduction}
One of the  most important algorithmic objects  in computational
algebraic geometry is {\em Gr\"{o}bner basis}. The notion of
Gr\"obner basis was introduced and an algorithm for its
construction was designed in 1965 by Buchberger in his Ph.D.
thesis \cite{Bruno1}. Later on, he discovered \cite{Bruno2} two
criteria for detecting some useless reductions that made the
Gr\"{o}bner bases method a practical tool to solve a wide class of
problems in polynomial ideal theory and in many other research
areas of science and engineering~\cite{Bruno3}. We refer to the
monograph \cite{Becker} for details on the theory of \Gr bases.

The concept of comprehensive Gr\"obner bases can be considered as
an extension of these bases for polynomials over fields to
polynomials with parametric coefficients. This extension plays an
important role in application to constructive algebraic geometry,
robotics, electrical network, automatic theorem proving and so on
(see, for example, \cite{manmon1,manmon2,monts1,monts2}). {\em
Comprehensive Gr\"obner bases} and equivalent to them {\em
comprehensive Gr\"obner systems} were introduced  in 1992 by
Weispfenning \cite{Weis}. He proved that any parametric polynomial
ideal has a comprehensive Gr\"obner basis and described an
algorithm to compute it. In 2002, Montes \cite{monts1} proposed a
more efficient algorithm ({\sc DisPGB}) for computing
comprehensive Gr\"obner systems. A year later Weispfenning in
\cite{canWeis} proved the existence of a canonical comprehensive
Gr\"obner basis. In 2003, Sato and Suzuki \cite{sasu} introduced
the concept of alternative comprehensive Gr\"obner bases. Then in
2006, Manubens and Montes in \cite{manmon1} by using discriminant
ideal improved {\sc DisPGB}, and in \cite{manmon2} they introduced
an algorithm for computing minimal canonical Gr\"obner systems.
Also in 2006, Sato and Suzuki \cite{susa} (see also \cite{suzuki})
suggested an important computational improvement for comprehensive
Gr\"obner bases by constructing the reduced Gr\"obner bases in
polynomial rings over ground fields. In 2010, Kapur, Sun and Wang
\cite{kapur}, by combining Weispfenning's algorithm \cite{Weis}
with Suzuki and Sato's algorithm \cite{susa}, proposed a new
algorithm for computing comprehensive Gr\"obner systems. More
recently, in 2010, Montes and Wibmer in \cite{monts3}  presented
the {\sc Gr\"obnerCover} algorithm (its implementation in {\tt
Singular} is available at {\tt
http://www-ma2.upc.edu/$\sim$montes/}) which computes a finite
partition of the parameter space into locally closed subsets
together with polynomial data and such that the reduced Gr\"obner
basis for given values of parameters can immediately be determined
from the partition.

{\em Involutive bases} form an important class of \Gr bases. The
theory of involutive bases goes back to the seminal works of
French mathematician Janet. In the $20$s of the last century, he
developed \cite{janet} a constructive approach to analysis of
certain systems of partial differential equations based on their
completion to involution (cf.  \cite{seiler}). Inspired by the
involution methods described in the book by
Pommaret~\cite{pommaret}, Zharkov and Blinkov \cite{zharkov}
introduced the concept of {\em involutive polynomial bases} in
commutative algebra in the full analogy with the concept of
involutive systems of homogeneous linear partial differential
equations with constant coefficients and in one dependent
variable. Besides, Zharkov and Blinkov designed the first
algorithm for construction of involutive polynomial bases. The
particular form of an involutive basis they used is nowadays
called {\em Pommaret basis}~\cite{seiler}.

Gerdt and Blinkov \cite{gerdt0} proposed a more general concept of
involutive bases for polynomial ideals and designed efficient
algorithmic methods to construct such bases. The underlying idea
of the involutive approach is to translate the methods originating
from Janet's approach into the polynomial ideals theory in order
to provide a method for construction of involutive bases by
combining algorithmic ideas in the theory of Gr\"obner bases with
constructive ideas in the theory of involutive differential
systems. In doing so, Gerdt and Blinkov \cite{gerdt0} introduced
the concept of {\em involutive division}. Moreover, they derived
the involutive form of Buchberger's criteria. This led to a strong
computational tool which is a serious alternative to the
conventional Buchberger algorithm. We refer to Seiler's book
\cite{seiler} for a comprehensive study and application of
involution to commutative algebra and geometric theory of partial
differential equations.

In this paper, we introduce a notion of {\em comprehensive
involutive systems}. For a parametric ideal, we decompose the
space of parameters into a finite set of cells, and for each cell
we yield the corresponding involutive basis of the ideal. Thereby,
for each values of parameters, we find first a cell containing
these values. Then, by substituting these values into the
corresponding basis, we get the involutive basis of the given
ideal. Based on the Gerdt--Blinkov involutive (abbreviated below
by {\sc GBI}) algorithm as described in \cite{gerdtnew} and also
the Montes {\sc DisPGB} algorithm \cite{monts1}, we present an
algorithm for constructing comprehensive involutive systems. The
proposed algorithm has been implemented in {\tt Maple}, and we
provide an illustrative example showing the step-by-step results
of the algorithm.

The paper is structured as follows. Section \ref{WSS} contains the
basic definitions and notations related to comprehensive Gr\"obner
systems, and a short description of the {\sc DisPGB} algorithm.
The basic definitions and notations from the theory of involutive
bases are given in Section \ref{sec:1}. In Section \ref{gerdt},
the notion of comprehensive involutive system is introduced, and
an algorithm for construction of such systems is described. In
Section \ref{ex}, we give an example illustrating in detail the
performance of the algorithm of Section \ref{gerdt}.

\section{Comprehensive Gr\"obner Systems}
\label{WSS} In this section, we recall the basic definitions and
notations in theory of comprehensive Gr\"obner systems and briefly
describe the {\sc DisPGB} algorithm.

Let $R=K[{\bf x}]$ be a polynomial ring, where ${\bf
x}=x_1,\ldots,x_n$ is a sequence of variables and $K$ is an
arbitrary field. Below, we denote a monomial $x_1^{\alpha_1}\cdots
x_n^{\alpha_n}\in R$ by ${\bf x}^\alpha$ where
$\alpha=(\alpha_1,\ldots,\alpha_n) \in \mathbb{N}^{n}$ is a
sequence of non-negative integers. We shall use the notations
$\deg_i({\bf x}^\alpha):=\alpha_i$, $\deg({\bf
x}^\alpha):=\sum_{i=1}^n \alpha_i$. An {\em admissible} monomial
ordering on $R$ is a total order $\prec$ on the set of all
monomials such that for any $\alpha,\beta,\gamma \in
\mathbb{N}^{n}$ the following holds:

\[
  {\bf x}^\alpha \succ {\bf x}^\beta \Longrightarrow {\bf x}^{\alpha+\gamma} \succ {\bf x}^{\beta+\gamma},\qquad {\bf x}^\alpha\neq 1 \Longrightarrow {\bf x}^\alpha \succ 1\,.
\]

A typical example of admissible monomial ordering is the {\em lexicographical ordering}, denoted by $\prec_{\lex}$. If ${\bf x}^\alpha,{\bf x}^\beta\in R$ are two monomials, then ${\bf x}^\alpha \prec_{\lex} {\bf x}^\beta$ if the leftmost nonzero entry of $\beta-\alpha$ is positive. Another typical example is the {\em degree-reverse-lexicographical ordering} denoted by $\prec_{\degrevlex}$ and defined as ${\bf x}^\alpha \prec_{\degrevlex} {\bf x}^\beta$ if $\deg({\bf x}^\alpha)>\deg({\bf x}^\beta)$ or $\deg({\bf x}^\alpha)=\deg({\bf x}^\beta)$ and the rightmost nonzero entry of $\beta-\alpha$ is negative.

We shall write $I=\langle f_1,\ldots ,f_k\rangle$ for the ideal $I$ in $R$ generated by the polynomials $f_1,\ldots ,f_k\in R$. Let $f\in R$ and $\prec$ be a monomial ordering on $R$. The {\em leading monomial} of $f$ is the largest monomial (with respect to $\prec$) occurring in $f$, and we denote it by $\LM(f)$. If $F\subset R$ is a set of polynomials, then we denote by $\LM(F)$ the set $\{\LM(f) \ \mid \ f\in F\}$ of its leading monomials. The {\em leading coefficient} of $f$, denoted by $\LC(f)$, is the coefficient of $\LM(f)$. The {\em leading term} of $f$ is $\LT(f)=\LC(f)\LM(f)$. The {\em leading term ideal} of $I$ is
defined as $\LT(I)=\langle \LT(f)\ \mid \ f \in I\rangle$.

\noindent
A finite set $G=\{g_1,\ldots ,g_k\}\subset I$ is called a {\em Gr\"obner basis} of $I$ if $\LT(I)=\langle \LT(g_1),\ldots,\LT(g_k) \rangle$. For more details and definitions related to \Gr bases we refer to \cite{Becker}.

Now consider $F=\{f_1,\ldots,f_k\}\subset S:=K[{\bf a},{\bf x}]$ where ${\bf a}=a_1,\ldots,a_m$ is a sequence of parameters. Let $\prec_{{\bf x}}$ (resp. $\prec_{{\bf a}}$) be a monomial ordering for the power products of $x_i$'s (resp. $a_i$'s). We also need a compatible {\it elimination product ordering} $\prec_{{\bf x,a}}$. It is defined as follows: For all $\alpha,\gamma\in {\Bbb Z}^n_{\geq 0}$ and $\beta,\delta\in {\mathbb Z}^m_{\geq 0}$

\[
{\bf x}^\gamma {\bf a}^\delta\prec_{\bf{x,a}} {\bf x}^\alpha {\bf a}^\beta \Longleftrightarrow
{\bf x}^\gamma \prec_{\bf x} {\bf x}^\alpha  \ \text{or} \ {\bf x}^\gamma ={\bf x}^\alpha \ {\rm and} \ {\bf a}^\delta\prec_{\bf a} {\bf a}^\beta\,.
\]

Now, we recall the definition of a comprehensive Gr\"obner system for a parametric ideal.

\begin{definition}{\em(\cite{Weis})}
A triple set $\{(G_i,N_i,W_i)\}_{i=1}^{\ell}$ is called a {\em comprehensive Gr\"obner system} for $\langle F\rangle$ w.r.t $\prec_{{\bf x,a}}$ if for any $i$ and any homomorphism $\sigma:K[{\bf a}]\rightarrow K^\prime$ (where $K'$ is a field extension of $K$) satisfying
\[
    (i)\ (\,\forall p \in N_i\subset K[{\bf a}]\,)\ [\,{\sigma} (p)=0\,],\qquad
    (ii)\ (\,\forall q \in W_i\subset K[{\bf a}]\,)\ [{\sigma} (q)\ne 0\,]
\]
we have ${\sigma}(G_i)$ is a Gr\"obner basis for ${\sigma}(\langle F\rangle) \subset K'[{\bf x}]$ w.r.t. $\prec_{\bf x}$.
\end{definition}

For simplification, we shall use the abbreviation CGS to refer to a comprehensive Gr\"obner system, and CGSs in the plural case.  For each $i$, the set $N_i$ (resp.  $W_i$) is called a (resp. non-) null conditions set. The pair $(N_i, W_i)$ is called a {\em specification} of the homomorphism $\sigma$ if both conditions in the above definition are satisfied.

\begin{example}
\label{ex1}
  Let $F=\{ax^2y-y^3,bx+y^2\}\subset K[a,b,x,y]$ where ${\bf a}=a,b$  and ${\bf x}=x,y$. Let us consider the lexicographical monomial ordering $b\prec_{\lex} a$ on the parameters and on the variables $y\prec_{\lex}x$ as well. Using the {\sc DisPGB} algorithm we can compute a CGS for $\langle F\rangle$ which is equal to

$\begin{array}{lll}
\hspace*{3.5cm}\ \{-b^2y^3+ay^5,bx+y^2\}\  &\ \{~\} &\{a,b\}\\
\hspace*{3.5cm}\ \{x^2y,y^2\}\  &\ \{b\}\ &\ \{a\}\\
\hspace*{3.5cm}\ \{y^3,bx+y^2\}\  &\ \{a\}\ &\ \{b\}\\
\hspace*{3.5cm}\ \{y^2\}\  &\ \{a,b\}\ \ &\ \{~\}\,.\\
\end{array}$

For instance, if $a=0,b=2$, then the third element of this system corresponds to this specialization. Therefore, $\{y^3,2x+y^2\}$ is a Gr\"obner basis for the ideal $\langle F\rangle|_{a=0,b=2}=\langle  -y^3,2x+y^2 \rangle$.
\end{example}

Remark that, by the above definition, a CGS is not unique for a
given parametric ideal, and one can find other partitions for the
space of parameters, and, therefore, other CGSs for the parametric
ideal.

Now, we briefly describe the Montes {\sc DisPGB} algorithm to
compute CGSs for parametric ideals (see \cite{monts1,manmon1}).
The main idea of {\sc DisPGB} is based on discussing the nullity
or not w.r.t. a given specification $(N, W)$ for the leading
coefficients of the polynomials appearing at each step (this
process is performed by the {\sc NewCond} subalgorithm). Let us
consider a set $F\subset S$ of parametric polynomials. Given a
polynomial $f\in F$ and a specification $(N, W)$, {\sc NewCond} is
called. Three cases are possible: If $\LC(f)$ specializes to zero
w.r.t. $(N,W)$, we replace $f$ by  $f - \LT(f)$, and then start
again. If $\LC(f)$ specializes to a nonzero element we continue
with the next polynomial in $F$. Otherwise (if $\LC(f)$ is not
decidable, i.e. we can't decide whether or not it is null w.r.t.
$(N,W)$), the subalgorithm {\sc Branch} is called to create two
complementary cases by assuming $\LC(f)=0$ and $\LC(f)\ne 0$.
Therefore,  two new disjoint branches with the specifications  $(N
\cup \{\LC(f)\},W)$ and $(N,W \cup \{\LC(f)\})$ are made. This
procedure is continued until every polynomial in $F$ has a nonnull
leading coefficient w.r.t. the current specification. Then, we
proceed with {\sc CondPGB}: This algorithm receives, as an input,
a set of parametric polynomials and a specification $(N,W)$ and,
by applying Buchberger's algorithm, creates new polynomials. When
a new polynomial is generated, {\sc NewCond} verifies whether its
leading coefficient leads to a new condition or not.  If a new
condition is found, then the subalgorithm stops, and {\sc Branch}
is called to make two new disjoint branches. Otherwise, the
process is continued and computes a Gr\"obner basis for $\li F
\ri$, according to the current specification. The collection of
these bases, together with the corresponding specifications yields
a CGS for  $\li F \ri$.

\section{Involutive Bases}
\label{sec:1}
Now we recall the basic definitions and notations concerning involutive bases and present below the general definition of involutive bases. First of all, we describe the cornerstone notion of {\em involutive division} \cite{gerdt0} as a restricted monomial division \cite{gerdtnew} which, together with a monomial ordering, determines properties of an involutive basis. This makes the main difference between involutive bases and Gr\"obner bases. The idea behind involutive division is to partition the variables into two subsets of multiplicative and nonmultiplicative variables, and only the multiplicative variables can be used in the divisibility relation.

\begin{definition}{\em\cite{gerdt0,gerdtnew}}
An {\em involutive division} $\L$ on the set of monomials of $R$ is given, if  for any finite set $U$ of monomials and any $u \in U$, the set of variables is partitioned into subsets $M_{\L}(u,U)$ of {\em multiplicative} and $NM_{\L}(u,U)$ of {\em nonmultiplicative} variables such that
\begin{enumerate}
\item $u,v\in U,\ u\L(u,U) \cap v\L(v,U) \ne \emptyset \Longrightarrow u\in v\L(v,U)$ or $v \in u\L(u,U)$,
\item $v\in U,\ v \in u\L(u,U) \Longrightarrow \L(v,U) \subset \L(u,U)$,
\item $u \in V$ and $V \subset U \Longrightarrow \L(u,U) \subset \L(u,V)$,
\end{enumerate}
where $\L(u,U)$ denotes  the set of all monomials in the variables in $M_\L(u,U)$. If $v \in u\L(u,U)$, then we call $u$ an $\L-${\em (involutive) divisor} of $v$, and we write $u |_\L v$. If $v$ has no involutive divisor in a set $U$, then it is {\em $\L-$irreducible} modulo $U$.
\end{definition}

In this paper, we are concerned with the wide class \cite{alex} of involutive divisions determined by a permutation $\rho$ on the indices of variables and by a total monomial ordering $\sqsupset$ which is either admissible or the inverse of an admissible ordering. This class is defined by
\begin{equation}
 (\ \forall u\in U\ ) \ \ [\ NM_{\sqsupset}(u,U)=\bigcup_{v\in U\setminus \{u\}}NM_{\sqsupset}(u,\{u,v\})\ ]
 \label{pair}
\end{equation}
where
\begin{equation}
NM_{\sqsupset}(u,\{u,v\}):=\left\lbrace
\begin{array}{l}
\text{ if } u\sqsupset v \text{ or } (u\sqsubset v \wedge v\mid u) \text{ then } \emptyset \\[0.1cm]
\text{ else }\{x_{\rho(i)}\},\ i=\min\{j\mid \deg_{\rho(j)}(u)< \deg_{\rho(j)}(v)\}\,.\\
\end{array}
\right. \label{inv_div}
\end{equation}

\begin{remark}
The involutive Janet division introduced and studied in
\cite{gerdt0} is generated by formulae
(\ref{pair})--(\ref{inv_div}) if  $\sqsupset$ is the lexicographic
monomial ordering $\succ_{\lex}$ and $\rho$ is the identical
permutation. The partition of variables used by Janet
himself~\cite{janet} (see also \cite{seiler}) is generated by
$\succ_{\lex}$ as well with the permutation which is inverse to
the identical one:
\begin{displaymath}
\rho=\left(
\begin{array}{cccc}
1 & 2 & \ldots & n \\
n & n-1 & \ldots & 1\\
\end{array}
\right)\,.
\end{displaymath}
\end{remark}

{\em Throughout this paper $\L$ is assumed to be a division of the
class (\ref{pair})--(\ref{inv_div})}. Now, we define an involutive
basis.
\begin{definition}
Let $I\subset R$ be an ideal, $\prec$ be a monomial ordering on $R$ and $\L$ be an involutive division. A finite set $G\subset I$ is an {\em involutive basis} of $I$ if for all $f\in I$ there exists $g\in G$ such that $\LM(g) |_\L \LM(f)$. An involutive basis $G$ is {\em minimal} if for any other involutive basis $\tilde{G}$ the inclusion $\LM(G)\subseteq \LM(\tilde{G})$ holds.
\label{def_ib}
\end{definition}

>From this definition and from that for Gr\"obner basis~\cite{Bruno1,Becker} it follows that an involutive basis of an ideal is its   Gr\"obner basis, but the converse is not always true.

\begin{remark}
By using an involutive division in the division algorithm for polynomial rings, we obtain an involutive division algorithm. If $G$ is an involutive basis for an involutive division $\L$, we use $\nf_\L(f,G)$ to denote {\em ${\L}-$normal form} of $f$ modulo $G$, i.e.
the remainder of $f$ on the involutive division by $G$. A polynomial set $F$ is {\em $\L-$autoreduced} if $f=\NF_{\L}(f,F\setminus \{f\})$ for every $f\in F$.
\end{remark}

\noindent
The following theorem provides an algorithmic characterization of involutive bases which is an involutive analogue of the Buchberger characterization of Gr\"obner bases.

\begin{theorem}{\em(\cite{gerdt0,alex})}
\label{blin}
Given an ideal $I\subset R$, an admissible monomial ordering $\prec$ on $R$ and an involutive division $\L$, a finite subset $G \subset I$ is an involutive basis of $I$ if for each $f\in G$ and each $x\in NM_\L(\LM(f),\LM(G))$ the equality $\nf_\L(xf,G)=0$ holds. An involutive basis exists for any $I$, $\L$ and $\prec$. A monic and $\L$-autoreduced involutive basis is uniquely defined by $I$ and $\prec$.
\label{alg_ib}
\end{theorem}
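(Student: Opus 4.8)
The plan is to prove the three assertions separately, the first of them (the involutive Buchberger criterion) being the substantive one and supplying the engine for the other two. The common underpinning is the \emph{involutive reduction} relation attached to $\L$ and $\prec$: since $u\mid_\L v$ implies $u\mid v$, the $\L$-normal form $\nf_\L(\cdot,G)$ is computed exactly as the ordinary multivariate division remainder, except that a monomial may be cancelled by $g\in G$ only when $\LM(g)$ is an $\L$-divisor (not merely a divisor) of it. Because $\prec$ is admissible, each cancellation strictly lowers the leading monomial of the unreduced part, so the procedure terminates and $\nf_\L(\cdot,G)$ is well defined on $R$.

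For the characterization I only need the stated implication: if $\nf_\L(xf,G)=0$ for every $f\in G$ and every $x\in NM_\L(\LM(f),\LM(G))$, then $G$ is an involutive basis. The first step is to pass to the monomial level. Writing $U:=\LM(G)$, the vanishing of $\nf_\L(xf,G)$ forces its leading monomial $x\,\LM(f)$ to possess an $\L$-divisor in $U$; hence $U$ is \emph{locally involutive}, meaning every prolongation $u\,x$ with $u\in U$ and $x\in NM_\L(u,U)$ lies in some cone $v\,\L(v,U)$ with $v\in U$. The crux is the combinatorial lemma that, for a division of the class (\ref{pair})--(\ref{inv_div}), local involutivity upgrades to \emph{global} involutivity: every monomial of the monoid ideal $\li U\ri$ has an $\L$-divisor in $U$. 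I would prove this via the continuity of such divisions established in \cite{alex}: assuming some $m\in\li U\ri$ had no $\L$-divisor, I would factor $m=u_1w_1$ with $u_1\in U$, select a nonmultiplicative variable dividing $w_1$, use local involutivity to pass to an $\L$-divisor $u_2$ of the corresponding prolongation, and iterate; continuity guarantees the $u_i$ are pairwise distinct, contradicting finiteness of $U$. This local-to-global step is where the three axioms of the involutive division and the specific structure (\ref{inv_div}) are indispensable, and I expect it to be the main obstacle.

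Once $\li U\ri=\bigcup_{u\in U}u\,\L(u,U)$ is known, the polynomial statement follows by the standard Buchberger-type rewriting in which nonmultiplicative prolongations play the role of $S$-polynomials. For an arbitrary $f\in I$ I would show $\nf_\L(f,G)=0$: starting from a representation $f=\sum_g P_g\,g$, any cancellation of equal leading monomials among its terms is resolved through the relations $\nf_\L(xf,G)=0$, which strictly lowers the maximal monomial involved; admissibility makes this descent terminate and produces an involutive standard representation of $f$. Its maximal term then exhibits an element $g\in G$ with $\LM(g)\mid_\L\LM(f)$, which is precisely the requirement of Definition \ref{def_ib}.

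Existence I would obtain constructively. Beginning from a finite generating set of $I$, I repeatedly adjoin nonzero $\L$-normal forms of nonmultiplicative prolongations, keeping the set $\L$-autoreduced. The divisions of the class (\ref{pair})--(\ref{inv_div}) are Noetherian \cite{alex}, so this completion halts; at termination every nonmultiplicative prolongation $\L$-reduces to zero and the characterization just proved shows the resulting finite set is an involutive basis. Finally, for uniqueness, fix $I$, $\prec$ and $\L$. A monic $\L$-autoreduced involutive basis has, at the monomial level, an $\L$-autoreduced involutively complete leading set; such a set is uniquely determined as the minimal involutive basis of the monomial ideal $\LT(I)$, so $\LM(G)$ is an invariant of $I$ and $\prec$. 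Given two such bases $G,G'$ and $u\in\LM(G)=\LM(G')$, the corresponding monic elements $g,g'$ satisfy $g-g'\in I$ with $\LM(g-g')\prec u$; since $\LM(G)=\LM(G')$ the cone structures coincide and both tails are already in $\L$-normal form, so $g-g'$ is $\L$-irreducible modulo the basis. As an involutive basis leaves no nonzero ideal element $\L$-irreducible, $\nf_\L(g-g',G)=g-g'=0$, whence $G=G'$.
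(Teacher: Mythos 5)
First, a point of comparison: the paper itself gives no proof of Theorem~\ref{blin} --- it is quoted from \cite{gerdt0,alex}, and the rest of the paper (Proposition~\ref{prop1}, the correctness proof for {\sc ComInvSys}) uses it as a black box. So your proposal can only be measured against the arguments in those references, and in outline you reconstruct exactly that route: pass to $U=\LM(G)$, upgrade local involutivity to involutivity via continuity, lift to polynomials by rewriting, obtain existence by completion, and uniqueness by comparing leading sets and subtracting. Your monomial-level argument (the chain $u_1,u_2,\dots$ with continuity forcing pairwise distinctness) and your final subtraction step are correct and standard.

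There is, however, a genuine gap in the lifting step, at precisely the point you wave through. You claim that resolving a top-monomial cancellation through the relations $\nf_\L(xf,G)=0$ ``strictly lowers the maximal monomial involved'', so that admissibility alone terminates the descent. That is false, and it is exactly where the involutive argument is harder than Buchberger's. Suppose a term $c\,m\,g$ of the representation has $m\LM(g)=M$ maximal and $m=xm'$ with $x\in NM_\L(\LM(g),\LM(G))$. The reduction $\nf_\L(xg,G)=0$ yields $xg=\sum_k q_k g_k$ with each $q_k$ multiplicative for $g_k$, and the step that reduces the head of $xg$ contributes a term $\mu\, g_{k_0}$ with $\mu\LM(g_{k_0})=x\LM(g)$. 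Substituting, $c\,m\,g=\sum_k c\,m'q_k\,g_k$ contains the term $c\,m'\mu\,g_{k_0}$, whose monomial is again $m'\mu\LM(g_{k_0})=M$: the maximal monomial does not drop, it gets re-attached to $g_{k_0}$, and $m'\mu$ need not be multiplicative for $\LM(g_{k_0})$. What terminates the rewriting at level $M$ is that the successive generators $g,g_{k_0},g_{k_1},\dots$ satisfy $\LM(g_{k_{i+1}})\mid_\L x_i\LM(g_{k_i})$ with $x_i$ nonmultiplicative, i.e.\ they form a continuity chain and hence are pairwise distinct, so the process at level $M$ stops after at most $|G|$ steps. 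This is the same tool you used at the monomial level, but you did not invoke it here, and admissibility cannot replace it.

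Two further soft spots. In the uniqueness part, the claim that the leading set of a monic $\L$-autoreduced involutive basis ``is uniquely determined as the minimal involutive basis of the monomial ideal $\LT(I)$'' is asserted, not proved; this is the substantive half of uniqueness (the subtraction argument is the easy half), and the naive cross-divisibility argument between two bases $G,G'$ does not close, because involutive divisibility inside $\LM(G)$ and inside $\LM(G')$ is taken with respect to different multiplicative assignments. In the existence part, ``Noetherian, so this completion halts'' compresses a real theorem: termination of involutive completion requires continuity as well, and constructivity if the set is kept autoreduced (cf.\ \cite{gerdt0,gerdtnew}). Within this paper a cheaper route is available: once the criterion is proved, Proposition~\ref{prop1} exhibits an explicit finite involutive basis starting from any minimal Gr\"obner basis, which settles existence for the whole division class (\ref{pair})--(\ref{inv_div}) at once.
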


\section{Comprehensive Involutive Systems}
\label{gerdt}

In this section, like the concept of comprehensive Gr\"obner
systems, we define the new notion  of comprehensive involutive
system  for a parametric ideal. Then, based on the GBI algorithm
\cite{gerdtnew} and the Montes {\sc DisPGB} algorithm
\cite{monts1}, we propose an algorithm for computing comprehensive
involutive systems.

\begin{definition}
Consider a finite set of parametric polynomials $F\subset S=K[{\bf a},{\bf x}]$ where $K$ is a field, ${\bf x}=x_1,\ldots,x_n$ is a sequence of variables and ${\bf a}=a_1,\ldots,a_m$ is a sequence of parameters, $\prec_{{\bf x}}$ (resp. $\prec_{{\bf a}}$) is a monomial ordering involving the $x_i$'s (resp. $a_i$'s), and $\L$ is an involutive division on $K[{\bf x}]$. Let  $M=\{(G_i,N_i,W_i)\}_{i=1}^{\ell}$ be a finite triple set where sets $N_i,W_i\subset K[{\bf a}]$ and $G_i\subset S$ are finite. The set $M$ is called an {\em ($\L-$)comprehensive involutive  system} for $\li F\ri$ w.r.t $\prec_{{\bf x,a}}$ if for each $i$ and for each homomorphism $\sigma:K[a]\rightarrow K^\prime$ (where $K'$ is a field extension of $K$) satisfying
\[
     (i)\ (\,\forall p \in N_i\,)\ [\, {\sigma} (p)=0\,], \qquad (ii)\ (\,\forall q \in W_i\,)\ [\,{\sigma} (q)\ne 0 \,]
\]
${\sigma}(G_i)$ is an ($\L-$)involutive basis for ${\sigma}(\li F\ri)\subset  K'[{\bf x}]$. We use the abbreviation CIS (resp. CISs) to stand for comprehensive involutive  system (resp. systems). $M$ is called {\em minimal}, if for each $i$, the set ${\sigma}(G_i)$ is a minimal involutive basis.
\end{definition}

Given a CGS, one can straightforwardly compute a CIS by using the following proposition.

\begin{proposition}
\label{prop1}
Let $G=\{g_1,\ldots,g_k\}$ be a minimal Gr\"obner basis of an ideal $I\subset K[x_1,\ldots,x_n]$ for a monomial ordering $\prec$. Let $h_i=\max_{g\in G}\{\deg_i(\LM(g))\}$. Then the set of products
\begin{equation}
\{mg \mid g\in G,\, \ m \ {\rm{is \ a \ monomial \ s.t.}}\ (\,\forall i\,)\ [\,\deg_i(m)\le h_i-\deg_i(\LM(g))\,]\}
\label{T-compl}
\end{equation}
 is an $\L$-involutive basis of $I$.
\end{proposition}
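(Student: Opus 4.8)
The plan is to verify the defining property of an involutive basis (Definition~\ref{def_ib}) directly: I will show that every nonzero $f\in I$ has an $\L$-involutive divisor among the leading monomials of the constructed set. Denote that set by $\tilde G$. Each of its elements is a monomial multiple of some $g\in G\subset I$, so $\tilde G\subset I$, and $\tilde G$ is finite since the leading monomials of its elements are confined to the box $B=\{w\mid \deg_i(w)\le h_i\text{ for all }i\}$. The first step is to identify $U:=\LM(\tilde G)$. Because $G$ is a Gr\"obner basis, $\LM(I)=\langle\LM(g)\mid g\in G\rangle$, and admissibility of $\prec$ gives $\LM(mg)=m\,\LM(g)$ for every monomial $m$. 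A direct check shows that $w=m\,\LM(g)$ obeys the degree bounds in (\ref{T-compl}) exactly when $\LM(g)\mid w$ and $w\in B$, so
\[
U=\{\,w\in\LM(I)\mid \deg_i(w)\le h_i\ \text{for all }i\,\}
\]
consists precisely of the monomials of $\LM(I)$ that lie in $B$.

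Next I fix a nonzero $f\in I$, set $w=\LM(f)\in\LM(I)$, and construct a candidate involutive divisor by truncating $w$ to the box: let $u$ be the monomial with $\deg_i(u)=\min(\deg_i(w),h_i)$. Picking $g\in G$ with $\LM(g)\mid w$, the inequalities $\deg_i(\LM(g))\le h_i$ and $\deg_i(\LM(g))\le\deg_i(w)$ give $\deg_i(\LM(g))\le\deg_i(u)$, hence $\LM(g)\mid u$ and $u\in U$. By construction $u\mid w$, and the cofactor $t=w/u$ satisfies $\deg_i(t)=\max(0,\deg_i(w)-h_i)$, which is positive only at those indices $i$ with $\deg_i(w)>h_i$, that is, exactly where $u$ meets the boundary of the box, $\deg_i(u)=h_i$.

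The heart of the argument is then to show that each such boundary variable is multiplicative for $u$ relative to $U$; this is where the specific form (\ref{pair})--(\ref{inv_div}) of the division is used. Suppose $x_{\rho(j)}$ is nonmultiplicative for $u$. By (\ref{pair}) some witness $v\in U\setminus\{u\}$ satisfies $x_{\rho(j)}\in NM_{\sqsupset}(u,\{u,v\})$, and the ``else'' clause of (\ref{inv_div}) forces $\deg_{\rho(j)}(u)<\deg_{\rho(j)}(v)$. But $v\in U\subset B$, so $\deg_{\rho(j)}(v)\le h_{\rho(j)}$, and therefore $\deg_{\rho(j)}(u)<h_{\rho(j)}$. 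Contrapositively, any variable $x_i$ with $\deg_i(u)=h_i$ is multiplicative for $u$. Since the cofactor $t=w/u$ involves only such boundary variables, it is a monomial in the multiplicative variables of $u$, whence $u\mid_\L w$. Writing $u=\LM(\hat g)$ for the corresponding $\hat g\in\tilde G$, this reads $\LM(\hat g)\mid_\L\LM(f)$, and since $f$ was arbitrary, Definition~\ref{def_ib} yields that $\tilde G$ is an $\L$-involutive basis of $I$.

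I expect the main obstacle to be exactly the multiplicativity claim of the third paragraph, but it resolves cleanly once one notices that in the class (\ref{pair})--(\ref{inv_div}) a variable can become nonmultiplicative only through a witness monomial of strictly larger degree in that variable, and the box $B$ admits no such witness at its boundary. Finally, I note that minimality of $G$ is not actually needed for the involutive basis property itself; it only guarantees that $B$, and hence $\tilde G$, is as small as possible, which is what makes the resulting basis a minimal one.
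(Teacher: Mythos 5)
Your proof is correct, and it takes a genuinely different route from the paper's. The paper forms the completion $\bar G$ of $G$ by all prolongations allowed in (\ref{T-compl}), observes via (\ref{pair})--(\ref{inv_div}) that nonmultiplicative variables of any element of $\LM(\bar G)$ have degree strictly below the bound $h_i$ (its inequality (\ref{nm_inclusion})), deduces that every nonmultiplicative prolongation of an element of $\LM(\bar G)$ has an involutive divisor in $\LM(\bar G)$, and then invokes the algorithmic characterization of Theorem \ref{alg_ib} to conclude that $\LM(\bar G)$ is an involutive basis of the monomial ideal $\langle\LM(G)\rangle$, finally transferring this to $I$ through $\LT(I)=\langle\LM(G)\rangle$. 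You never use Theorem \ref{alg_ib}: you identify $\LM(\tilde G)$ as exactly the set of monomials of $\LT(I)$ lying in the box $B$, and for each nonzero $f\in I$ you exhibit an explicit involutive divisor of $\LM(f)$, namely its truncation $u$ to $B$, checking that the discarded cofactor involves only variables in which $u$ attains the bound $h_i$ and that such variables are necessarily multiplicative. The pivotal fact is the same in both arguments --- in the class (\ref{pair})--(\ref{inv_div}) a variable can be nonmultiplicative for $u$ only if some witness $v$ in the reference set has strictly larger degree in that variable, which is impossible at the boundary of $B$ --- but your verification of Definition \ref{def_ib} is direct and self-contained: it bypasses a nontrivial theorem whose proof lies outside this paper, and it produces a concrete divisor rather than relying on a completion-plus-local-criterion argument; the paper's route, in exchange, is shorter and follows the standard pattern of the involutive-bases literature. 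One side remark of yours needs correction: you are right that minimality of $G$ is not needed for the conclusion, but your closing claim that minimality of $G$ makes the output basis a minimal involutive basis is false --- in the paper's Example \ref{ex2} the input is a minimal Gr\"obner basis, yet Proposition \ref{prop1} yields $\{ax^2,by^2,ayx^2,ay^2x^2,bxy^2,bx^2y^2\}$ whereas the minimal Janet basis is $\{ax^2,by^2,bxy^2\}$; minimality of $G$ only keeps the box $B$, and hence the construction, as small as possible.
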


\begin{proof}
Denote $\LM(G)$ by $U$. From (\ref{pair})--(\ref{inv_div}) it
follows
\begin{equation}
(\,\forall u\in U\,)\ (\,\forall x_i\in NM_{\L}(u,U))\,)\ [\,\deg_i(u)<h_i\,]\,.
\label{nm_inclusion}
\end{equation}
It is also clear that if we enlarge $G$ with a (not necessarily nonmultiplicative) prolongation $gx_j$ of its element $g\in G$ such that $\deg_j(\LM(g))<h_j$, then (\ref{nm_inclusion}) holds for the enlarged leading monomial set $U:=U\cup \{\LM(g)x_j\}$ as well. Consider  completion $\bar{G}$ of the polynomial set $G$ with all possible prolongations of its elements satisfying (\ref{T-compl}) and denote  the monomial set $\LM(\bar{G})$ by $\bar{U}$. Then
\[
(\,\forall u\in \bar{U}\,)\ (\,\forall x\in NM_{\L}(u,U)\,)\ (\,\exists v\in \bar{U}\,)\ [\,v\mid_{\L} ux\,]\,.
\]
This means, by Theorem \ref{alg_ib}, that the monomial set $\bar{U}$ is an involutive basis of $\langle \LM(G)\rangle$. Now, since $G$ is a \Gr basis of $I$ we have $\LT(I)=\langle \LM(G)\rangle$, and hence $\LT(I)=\langle \LM(\bar{G})\rangle$. Therefore, $\bar{G}$ is an involutive basis of $I$ by Definition \ref{def_ib}. $\Box$.
\end{proof}

\begin{example}
\label{ex2}
  Let $F=\{ax^2,by^2\}\subset \mathbb{K}[{\bf a},{\bf x}]$ where ${\bf a}=a,b$  and ${\bf x}=x,y$. Let also $b\prec_{lex} a$ and $y\prec_{lex}x$. Then, $F$ is a CGS for any sets of null and nonnull conditions. Using Proposition \ref{prop1}, we can construct the following Janet basis of $\li F\ri$ which is a GIS for any sets of null and nonnull conditions:
$$\{ax^2,by^2,ayx^2,ay^2x^2,bxy^2,bx^2y^2\}\,.$$
On the other hand, the algorithm that we present below computes the following minimal Janet CIS for $\li F\ri$:

$\begin{array}{lll}
\hspace*{3.5cm}\ \{ax^2,by^2,bxy^2\}\ \ & \{\ \} & \{a,b\}\\
\hspace*{3.5cm}\ \{ax^2\}  & \{b\} & \{a\}\\
\hspace*{3.5cm}\ \{by^2\}  & \{a\} & \{b\}\\
\hspace*{3.5cm}\ \{0\}  & \{a,b\}\ \ & \{~\}.\\
\end{array}$
\end{example}
\begin{remark}
  Using Proposition \ref{prop1}, we cannot directly compute a minimal CIS from a given CGS. Indeed, to do this, we must examine the leading coefficients of each  Gr\"obner basis in the CGS, and this may lead to further partitions of the space of parameters. Moreover, the CIS computed by this way may be too large, since many prolongations constructed by means of (\ref{T-compl}) may be useless. That is why, based on the GBI algorithm \cite{gerdtnew} and on the Montes {\sc DisPGB} algorithm \cite{monts1}, we propose a more efficient algorithm for computing minimal CISs.
\end{remark}

Now we describe the structure of polynomials that is used in our new algorithm. To avoid unnecessary reductions (during the computation of involutive bases) by applying the involutive form of Buchberger's criteria (see \cite{gerdtnew}), we need to supply polynomials with additional structural information.

\begin{definition}{\em\cite{gerdtnew}}
An {\em ancestor} of a polynomial $f \in F \subset R \setminus \{0\}$, denoted by $\anc(f)$, is a polynomial $g \in  F$
of the smallest $\deg(\LM(g))$ among those satisfying $\LM(f) = u\LM(g)$ where $u$ is either the unit monomial or a power product of nonmultiplicative variables for $\LM(g)$ and such that $\nf_{\L}(f-ug,F\setminus\{f\})=0$ if $f\neq ug$.
\end{definition}

Below we show how to use this concept to apply the involutive form of Buchberger's criteria.  In what follows, we store each polynomial $f$ as the $p=[f,g,V]$ where $f=\poly(p)$ is the polynomial part of $p$, $g=\anc(p)$ is the ancestor of $f$ and $V=NM(p)$ is the list of nonmultiplicative variables of $f$ have been already used to construct prolongations of $f$ (see the {\bf for-}loop 20-23 in the subalgorithm GBI).  If $P$ is a set of triples, we denote by $\poly(P)$ the set $\{\poly(p) \ \mid \ p\in P\}$. If no  confusion arises, we may refer to a triple $p$ instead of $\poly(p)$, and vice versa.

We present now the main algorithm {\sc ComInvSys} which computes a minimal CIS for a given ideal. It should be noted that we use the subalgorithms {\sc NewCond}  and {\sc CanSpec} (resp. {\sc TailNormalForm}) as they have (resp. it has) been presented in \cite{monts1} (resp. \cite{gerdtnew}), and recall them for the sake  of completeness. Also, we use the subalgorithm {\sc Branch} (resp. {\sc GBI} , {\sc HeadReduce} and {\sc HeadNormalForm}) from \cite{monts1} (resp. \cite{gerdtnew}) with some appropriate modifications.

\begin{algorithm1}{{\sc ComInvSys}
\label{ComInvSys}}
\begin{algorithmic}[1]
\INPUT $F$, a set of polynomials; $\L$, an involutive division; $\prec_{{\bf x}}$,  a monomial ordering on the variables; $\prec_{{\bf a}}$, a monomial ordering on the parameters
\OUTPUT a minimal CIS for $\langle F\rangle$
    \STATE global: {\tt List}, {\tt ind};
    \STATE {\tt List}:=Null;
    \STATE {\tt ind}:=1;
    \STATE $B:=\{[F[i],F[i],\emptyset] \ | \ i=1,\ldots ,|F| \}$;
    \STATE $G:=\{${\sc Branch}$([F[1],F[1],\emptyset],B,\{~\},\{~\},\{~\})\}$;
    \FOR {i {\bf from} $2$ {\bf to} $|F|$}
         \STATE {\tt ind}:={\tt ind}$+1$;
         \STATE $G:=\{${\sc Branch}$([F[i],F[i],\emptyset],A[2],A[3],A[4],A[5]) \ | \ A \in G\}$;
    \ENDFOR
    \STATE {\bf Return} ({\tt List})
\end{algorithmic}
\end{algorithm1}

\noindent
In the above algorithm, {\tt List} is a global variable to which we add any computed involutive basis together with its corresponding specification to form the final CIS. That is why, at the beginning of computation we must set it to the empty list (see {\sc Branch}). Note that here and in {\sc Branch}, we use $|F|$ to denote the number of polynomials in the input set $F$. The variable {\tt ind} is also a global variable, and we use it to examine all the leading coefficients of the elements in $F$ (see {\sc Branch}). Once we are sure about the non-nullity of these coefficients, then we start the involutive basis computation. Indeed, {\sc Branch} inputs a triple $p=[f,g,V]$, a set $B$ of examined and processed polynomials, a set $N$ of null conditions, a set $W$ of nonnull conditions and a set $P$ of non-processed polynomials. Then, it analyses the leading coefficient of $f$ w.r.t. $N$ and $W$. Now, two cases are possible:
\begin{itemize}
  \item {\tt ind}$<|F|$: If $\LC(f)$ is not decidable by $N$ and $W$ then we create two complementary cases by assuming $\LC(f)=0$ and $\LC(f)\ne 0$. Then we pass to the next polynomial in $F$.
\item {\tt ind}$=|F|$: We are now sure that we have examined all the leading coefficients of the elements in $F$ (except possibly the very last one which is to be $f$). If $\LC(f)$ is not decidable by $N$ and $W$ then we again create two complementary cases with $\LC(f)=0$ and $\LC(f)\ne 0$. Otherwise, we continue to process the polynomials in $P$ by using the {\sc GBI}  algorithm. If $P=\emptyset$ this means that $B$ is an involutive basis  consistent with the conditions in $N$ and $W$, and we add $(B,N,W)$ to {\tt List}.
\end{itemize}

It is worth noting that if the input specification of {\sc Branch} is incompatible, then it stops the process only for the corresponding  branch, and  continues the construction of other branches. Moreover, using the above notations, if {\tt ind}$<|F|$ and no new condition is detected, then {\sc Branch} returns an element of the form $(p,{\overline{B}}^{N'},N',W',P)$ where  $p$ is a triple, $N',W'$ are two sets of conditions, ${\overline{B}}^{N'}$  is the normal form of  a specializing basis $B$ and $P$ is a set of non-examined triples. Otherwise, it calls itself to create the new branches. Finally, if {\tt ind}$=|F|$, then the algorithm does not return anything and completes the global variable {\tt List}.

\begin{subalgorithm1}{{\sc Branch}
\label{Branch}}
\begin{algorithmic}[1]
\INPUT $p$, a triple; $B$, a specializing basis; $N$, a set of null conditions; $W$, a set of nonnull conditions; $P$, a set of non-examined triples
\OUTPUT It stores the refined $(B', N', W',P')$, and creates two new vertices when necessary or marks the vertex as terminal
\STATE $p=[f,g,V]$;
\STATE {$(test,N,W)$:={\sc CanSpec}$(N,W)$};
\IF{$test$=false}
 \STATE {\bf Return} STOP (incompatible specification has been detected)
\ENDIF
\STATE $(cd,f',N',W'):=${\sc NewCond}$(f,N,W)$;
\STATE $p:=[f',{\overline{g}}^{N'},V]$ (${\overline{g}}^{N'}$ denotes the remainder of the division of $g$ by $N'$);
\IF{{\tt ind} $<|F|$ and $cd \ne  \{~\}$}
     \STATE  {\sc Branch}$(p,B,N',W'\cup cd,P)$;\quad {\sc Branch}$(p,B,N'\cup cd,W',P)$;
\ENDIF
\IF{{\tt ind} $<|F|$ and $cd =  \{~\}$}
     \STATE {\bf Return} $(p,{\overline{B}}^{N'},N',W',P)$
\ENDIF
\IF{$cd = \{~\}$}
    \STATE $(test,p',B',N',W',P'):=${\sc GBI} $(B,N',W',P)$;
    \IF {\em test}
            \STATE {\tt List}:={\tt List},$(B',N',W')$;
     \ELSE
        \STATE {\sc Branch}$(p',B',N',W',P')$;
     \ENDIF
\ELSE
     \STATE  {\sc Branch}$(p,B,N',W'\cup cd,P)$;\quad {\sc Branch}$(p,B,N'\cup cd,W',P)$;
\ENDIF
\end{algorithmic}
\end{subalgorithm1}

\noindent The subalgorithm {\sc CanSpec} produces a
quasi-canonical representation for a given specification.  Its
subalgorithm {\sc FacVar} invoked in lines 1 and  13 returns the
set of factors of its input polynomial.

\begin{definition}{\em(\cite{monts1})}
 A specification $(N,W)$ is called {\it quasi-canonical} if
\begin{itemize}
\item $N$ is the reduced Gr\"obner basis w.r.t. $\prec_{\bf a}$ of the ideal containing all  polynomials that specialize to zero in $K[{\bf a}]$.
\item The polynomials in $W$ specializing to non-zero are reduced modulo $N$ and irreducible over $K[{\bf a}]$
\item $\prod_{q\in W}q \notin \sqrt {\li N \ri}$.
\item  The polynomials in $N$ are square-free  over $K[{\bf a}]$.
\item  If some $p \in N$ is factorized, then no factor of $p$ belongs to $W$.
\end{itemize}
\end{definition}

\vskip 0.2cm
\begin{subalgorithm1}{{\sc CanSpec}
\label{CanSpec}}
\begin{algorithmic}[1]
\INPUT $N$, a set of null conditions; $W$, a set of nonnull conditions
\OUTPUT true  if $N$ and $W$  are compatible and false otherwise;
   $(N',W')$, a quasi-canonical representation of $(N,W)$
 \STATE $W':= ${\sc FacVar}$(\{{\overline{q}}^N : q \in W\})$;\quad
       $test$:=true;\quad $N':=N$;\quad $h:=\prod_{q\in W} q$;
 \IF {$h \in \sqrt{\li N'\ri}$}
      \STATE $test$:=false;\quad $N':=\{1\}$;
      \STATE {\bf Return} $(test , N',W')$;
 \ENDIF
 \STATE $flag$:=true;
 \WHILE {$flag$}
     \STATE $flag$:=false;
     \STATE $N''$:= Remove any factor of a polynomial in $N'$ that belongs to $W'$;
     \IF {$N''\ne N'$}
          \STATE $flag$:=true;
          \STATE $N'$:= a Gr\"obner basis of $\li N'' \ri$ w.r.t. $\prec_{\bf a}$;
          \STATE  $W':= ${\sc FacVar}$(\{{\overline{q}}^{N^{'}} : q \in W'\})$;
      \ENDIF
 \ENDWHILE
\STATE {\bf Return} $(test , N',W')$
\end{algorithmic}
\end{subalgorithm1}

\vskip 0.3cm
\begin{subalgorithm1}{{\sc NewCond}
\label{NewCond}}
\begin{algorithmic}[1]
\INPUT $f$, a parametric polynomial; $N$, a set of null conditions; $W$, a set of nonnull conditions
\OUTPUT $cd$, a new condition; $f'$, a parametric polynomial;
$N'$, a set of null conditions; $W'$, a set of nonnull conditions
 \STATE $f':=f$;\quad $test$:=true;\quad $N':=N$;\quad cd:=\{\ \};
 \WHILE {$test$}
     \IF {$\LC(f') \in \sqrt{\li N' \ri}$}
          \STATE $N':=$ a Gr\"obner basis for $\li N',\LC(f')\ri$ w.r.t. $\prec_{\bf a}$;
          \STATE $f':=f'-\LT(f)$;
     \ELSE
          \STATE $test$:=false;
     \ENDIF
\ENDWHILE
\STATE $f':={\overline{f'}}^{N'}$;
\STATE $W':=\{{\overline{w}}^{N'} \ |\ w\in W\}$;
\STATE $cd:=cd\,\, \cup\,${\sc FacVar}$(\LC(f')) \setminus W';$
\STATE {\bf Return}$(cd,f',N',W')$
\end{algorithmic}
\end{subalgorithm1}

\noindent
We describe now the {\sc NewCond} subalgorithm. When it is invoked in line 6 of {\sc Branch} with the input data $(f,N,W)$, one of the two following cases may occur:
 \begin{enumerate}
   \item If $\LC(f)$ is decidable w.r.t. the specification $(N,W)$, then the subalgorithm returns:
     \begin{itemize}
       \item[(i)] {\sc NewCond}$(f-\LT(f),N,W)$ in the case when $\LC(f)$ specializes to zero w.r.t. $(N,W)$.
        \item[(ii)] ($\emptyset,f,N,W)$ in the case when $\LC(f)$ does not specialize to zero w.r.t. $(N,W)$.
     \end{itemize}
    \item If $\LC(f)$ is not decidable w.r.t $(N,W)$, then {\sc NewCond} returns $(cd,f,N,W)$ where set $cd$ contains one of the non-decidable factors (w.r.t $(N,W)$) of $\LC(f)$.
 \end{enumerate}
 It should be emphasized that {\sc FacVar}$(\LC(f')) \setminus W'$  in line 12 returns only one factor of $\LC(f')$.

The subalgorithm {\sc GBI}, presented below, is an extension of the algorithm {\sc InvolutiveBasis II} described in \cite{gerdtnew}. The latter algorithm computes involutive bases and applies the involutive form of Buchberger's criteria to avoid some unnecessary reductions~\cite{gerdt0} (see also \cite{detecting,gerdtnew}). The criteria are applied in the subalgorithm {\sc HeadNormalForm} (see line 7) that is invoked in line 5 of {\sc GBI}.

\begin{proposition}{\em(\cite{gerdt0,gerdtnew})}
\label{crit}
Let $I\subset R$ be an ideal and $G\subset I$ be a finite set. Let also $\prec$ be a monomial ordering on $R$ and $\L$ be an involutive division. Then $G$ is an $\L-$involutive basis of $I$ if for all $f\in G$ and for all $x\in NM_\L(\LM(f),\LM(G))$ one of the two conditions holds:
\begin{enumerate}
\item $\nf_\L(xf,G)=0$\,.
\item There exists $g \in G$ with $\LM(g) |_\L \LM(xf)$ satisfying one of the following conditions:
\begin{description}
\item[$(C_1)$] $\LM(\anc(f))\LM(\anc(g))=\LM(xf)$\,,
\item[$(C_2)$] $\lcm(\LM(\anc(f)),\LM(\anc(g)))$ is a proper divisor of $\LM(xf)$\,.
\end{description}
\end{enumerate}
\end{proposition}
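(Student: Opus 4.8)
The plan is to derive the statement from the algorithmic characterization of Theorem~\ref{alg_ib}, which asserts that it suffices to prove that \emph{every} nonmultiplicative prolongation reduces involutively to zero, i.e.\ $\nf_\L(xf,G)=0$ for all $f\in G$ and all $x\in NM_\L(\LM(f),\LM(G))$. For the prolongations falling under case~1 there is nothing to do, so the entire content of the proposition is to show that a prolongation meeting criterion $(C_1)$ or $(C_2)$ of case~2 also reduces to zero. First I would order the set of all nonmultiplicative prolongations by their leading monomials with respect to $\prec$ and argue by Noetherian induction: assuming that every prolongation whose leading monomial is strictly smaller than $\LM(xf)$ already involutively reduces to zero, I would establish the same for $xf$. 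This is the involutive transcription of the classical fact that Buchberger's criteria certify the vanishing of an $S$-polynomial once all $S$-polynomials of lower leading monomial have been shown to vanish.

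Next I would treat the two criteria separately, each as the involutive counterpart of one of Buchberger's criteria. For $(C_1)$, the equality $\LM(\anc(f))\LM(\anc(g))=\LM(xf)$ expresses that the leading monomials of the two ancestors are relatively prime, so their product coincides with their $\lcm$; using the defining property of $\anc$ (that the difference between a polynomial and the corresponding ancestral multiple has $\L$-normal form zero) I would rewrite $xf$ as a combination of the ancestors and invoke the product-criterion identity to express it through prolongations whose leading monomials lie strictly below $\LM(xf)$, which vanish by the inductive hypothesis. For $(C_2)$, the hypothesis that $\lcm(\LM(\anc(f)),\LM(\anc(g)))$ is a proper divisor of $\LM(xf)$ lets me factor the prolongation through the intermediate element $g$ with $\LM(g)\mid_\L\LM(xf)$, splitting the reduction into a step toward $g$ and a chain identity relating $xf$ to two prolongations built on the common multiple of the ancestors; both of the latter have strictly smaller leading monomial and reduce to zero by induction.

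The delicate point, and the main obstacle, is precisely that in the ordinary Gr\"obner setting the cofactors in the $S$-polynomial rewriting may be arbitrary monomials, whereas here each reduction step is permitted to multiply only by monomials in the multiplicative variables $M_\L(\cdot,\LM(G))$. I would therefore have to verify that the monomial multipliers arising when re-expressing $xf$ through the ancestors and through the intermediate element $g$ are genuinely $\L$-multiplicative for the generators they act on, using the degree constraints forced by (\ref{pair})--(\ref{inv_div}) together with the involutivity axioms~1--3 to guarantee that the involutive cones of the ancestors cover the prolongations in question. Showing that $(C_1)$ and $(C_2)$ are exactly the combinatorial conditions under which these multipliers remain multiplicative, so that the classical syzygy cancellation can be carried out entirely by \emph{involutive} reductions rather than merely ordinary ones, is the crux of the argument; the accompanying bookkeeping on leading monomials and the induction itself are then routine.
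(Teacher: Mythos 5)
The paper itself gives no proof of Proposition~\ref{crit}: it is quoted from \cite{gerdt0,gerdtnew}, so your attempt has to be measured against the argument in those sources. Your opening reduction is the right one --- by Theorem~\ref{alg_ib} it suffices to show that a prolongation $xf$ falling under case~2 also satisfies $\nf_\L(xf,G)=0$ --- and you correctly spot the central obstacle, namely that involutive reduction permits only multiplicative cofactors. The gap is in your proposed resolution of that obstacle. It is not true that the cofactors arising in the syzygy rewriting can be shown to be multiplicative, and $(C_1)$, $(C_2)$ are not ``the combinatorial conditions under which these multipliers remain multiplicative''. The failure occurs at the very first step of your plan: the ancestor relation supplies an involutive representation $f-u\,\anc(f)=\sum_i t_i g_i$ with each $t_i$ multiplicative for $g_i$, but after multiplying by the nonmultiplicative variable $x$ the cofactors $xt_i$ and $xu$ are in general multiplicative for nothing, so nothing about the involutive reducibility of $xf-xu\,\anc(f)$ follows, and no Noetherian induction on $\LM(xf)$ repairs this, because what the product-criterion and chain-criterion identities produce is an \emph{ordinary} standard representation, not an involutive one. (A separate inaccuracy: $(C_1)$ does not say that $\LM(\anc(f))$ and $\LM(\anc(g))$ are relatively prime; it says their product equals the one specific common multiple $\LM(xf)$, which is exactly what allows the product-criterion identity to be applied to that monomial.)

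The proof in the cited sources circumvents involutive reductions in the cancellation argument altogether and establishes Definition~\ref{def_ib} in two independent halves. First, in both cases 1 and 2 the monomial $\LM(xf)=x\LM(f)$ has an involutive divisor in $\LM(G)$ --- in case 1 because otherwise the leading term of $xf$ would survive involutive reduction; applying Theorem~\ref{alg_ib} to the monomial set $\LM(G)$, for which having involutive normal form zero is the same as having an involutive divisor, shows that $\LM(G)$ is an involutive basis of the monomial ideal $\li \LM(G)\ri$. Second, the criteria together with the ancestor relations are used only with ordinary reductions: via the classical product and chain criteria applied to the ancestors they yield standard representations for the relevant S-polynomials, so that $G$ is an ordinary Gr\"obner basis and $\LT(I)=\li \LM(G)\ri$. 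Then for any $h\in I$ the monomial $\LM(h)$ lies in $\li \LM(G)\ri$, hence by the first half it has an involutive divisor in $\LM(G)$, which is precisely Definition~\ref{def_ib}. The vanishing $\nf_\L(xf,G)=0$ for the skipped prolongations is a \emph{corollary} of the theorem, not a lemma one can prove directly by the induction you describe. Note finally that the soundness of the chain-type criterion $(C_2)$ in the involutive setting is genuinely delicate (see \cite{detecting}), so the part of the argument your plan defers as routine is in fact where most of the work lies.
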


\begin{subalgorithm1}{{\sc GBI}
\label{Gerdt}}
\begin{algorithmic}[1]
\INPUT $B$, a specializing basis; $N$, a set of null conditions;
$W$, set of nonnull conditions; $P$, set of non-examined triples
\OUTPUT If $test$=true, a minimal involutive basis for $\li B \ri$ w.r.t. $\L$ and $\prec_{{\bf x,a}}$; otherwise, it returns a triple so that we must discuss the leading coefficient of its polynomial part
\IF {$P=\emptyset$}
     \STATE Select $p \in B$ with no proper divisor of $\LM(\poly(p))$ in $\LM(\poly(B))$
     \STATE $T:=\{p\}$;\quad $Q:=B \setminus \{p\}$;
\ELSE
     \STATE $T:=B$;\quad $Q:=P$;
\ENDIF
\WHILE {$Q \ne \emptyset$}
     \STATE $(test,p,T,N,W,Q'):=${\sc HeadReduce}$(T,N,W,Q)$;
     \IF {$test=$false}
           \STATE {\bf Return} $({\rm false},p,T,N,W,Q')$
     \ENDIF
     \STATE $Q:=Q'$;
     \STATE Select and remove $p \in Q$ with no proper divisor of $\LM(\poly(p))$ in $\LM(\poly(Q))$;
     \IF {$\poly(p)=\anc(p)$}
           \FOR {$q\in T$ whose $\LM(\poly(q))$ is a proper multiple of $\LM(\poly(p))$}
                 \STATE $Q:=Q \cup \{q\}$;\quad $T:=T \setminus \{q\}$;
           \ENDFOR
     \ENDIF
      \STATE $h:=${\sc TailNormalForm}$(p,T)$;\quad $T:=T \cup \{\{h,\anc(p),NM(p)\}\}$;
      \FOR {$q\in T$ and $x\in NM_\L(\LM(\poly(q)),\LM(\poly(T))) \setminus NM(q)$}
           \STATE $Q:=Q \cup \{\{x\ \poly(q),\anc(q),\emptyset\}\}$;
           \STATE $NM(q):=NM(q) \cap NM_\L(\LM(\poly(q)),\LM(\poly(T))) \cup \{x\}$;
      \ENDFOR
\ENDWHILE
\STATE {\bf Return} (${\rm true},0,T,N,W,\{~\}$)
\end{algorithmic}
\end{subalgorithm1}

\noindent
This algorithm invokes  three subalgorithms {\sc HeadReduce}, {\sc TailNormalForm} and {\sc HeadNormalForm} that we present below. The subalgorithm {\sc HeadReduce} performs the involutive head reduction of polynomials in the input set of triples modulo the input specializing basis. The subalgorithm {\sc TailNormalForm} (resp. {\sc HeadNormalForm}) invoked in line 19 of {\sc GBI} (resp. in line 4 of {\sc HeadReduce}) computes the involutive tail normal form (resp. the involutive head normal form) of the polynomial in the input triple modulo the input specializing basis.

In the subalgorithm {\sc HeadNormalForm}, the Boolean expression Criteria$(p,g)$ is true if at leat one of the conditions $(C_1)$ or $(C_2)$ in Proposition \ref{crit} are satisfied for $p$ and $g$, false otherwise. We refer to \cite{gerdtnew} for more details on the algorithm {\sc GBI} and on its subalgorithms.

\begin{subalgorithm1}{{\sc HeadReduce}
\label{HeadReduce}}
\begin{algorithmic}[1]
\INPUT $B$, a specializing basis; $N$,  a set of null conditions; $W$, a set of nonnull conditions;
 $P$ a set of non-examined triples
\OUTPUT If $test$=true, the $\L$-head reduced form of $P$ modulo $B$; otherwise, it returns a triple such that we must examine the leading coefficient of its polynomial part
 \STATE $S:=P$;\quad $Q:=\emptyset$;
\WHILE {$S \ne \emptyset$}
     \STATE Select and remove $p \in S$;
     \STATE $(test,h,B,N,W):=${\sc HeadNormalForm}$(p,B,N,W)$;
\IF {$test$=false}
    \STATE {\bf Return} $({\rm false},p,B,N,W,S\cup Q)$
\ENDIF
\IF {$h\ne 0$}
          \IF{$\LM(\poly(p))\ne \LM(h)$}
               \STATE $Q:=Q \cup \{\{h,h,\emptyset\}\};$
          \ELSE
                \STATE $Q:=Q \cup \{p\};$
          \ENDIF
     \ELSE
          \IF{$\LM(\poly(p))=\LM(\anc(p))$}
               \STATE $S:=S \setminus \{q\in S \ | \ \anc(q)=\poly(p)\}$;
          \ENDIF
      \ENDIF
\ENDWHILE
\STATE {\bf Return} (${\rm true},0,B,N,W,Q$)
\end{algorithmic}
\end{subalgorithm1}

\vskip 0.5cm
\begin{subalgorithm1}{{\sc TailNormalForm}
\label{TailNormalForm}}
\begin{algorithmic}[1]
\INPUT $p$, a triple; $B$, a set of triples
\OUTPUT $\L$-normal form of $\poly(p)$ modulo $\poly(B)$
 \STATE $h:=\poly(p)$;
\STATE $G:=\poly(B)$;
\WHILE {$h$ has a term $t$ which is  $\L-$reducible modulo $G$}
       \STATE Select $g\in G$ with $\LM(g) |_\L t$;
       \STATE $h:=h-g\frac{t}{\LT(g)}$;
\ENDWHILE
\STATE {\bf Return} ($h$)
\end{algorithmic}
\end{subalgorithm1}

\newpage
\begin{subalgorithm1}{{\sc HeadNormalForm}
\label{HeadNormalForm}}
\begin{algorithmic}[1]
\INPUT $p$, a triple; $B$, a specializing basis; $N$, a set of null conditions; $W$, set of nonnull conditions
\OUTPUT If $test$=true, the $\L$-head normal form of $\poly(p)$ modulo $B$; otherwise, a polynomial whose leading coefficient must be examined
 \STATE $h:=\poly(p)$;\quad $G:=\poly(B)$;
\IF{$\LM(h)$ is $\L$-irreducible modulo $G$}
     \STATE {\bf Return} (${\rm true},h,B,N,W$)
\ELSE
     \STATE Select $g\in G$ with $\LM(\poly(g)) |_\L \LM(h)$;
          \IF{$\LM(h)\ne \LM(\anc(p))$}
               \IF{Criteria$(p,g)$}
                     \STATE {\bf Return} (${\rm true},0,B,N,W$)
               \ENDIF
          \ELSE
               \WHILE {$h\ne 0$ and $\LM(h)$ is $\L$-reducible modulo $G$}
                     \STATE Select $g\in G$ with $\LM(g) |_\L \LM(h)$;
                     \STATE $h:=h-g\frac{\LT(h)}{\LT(g)}$;
                     \STATE $(cd,h',N',W'):=${\sc NewCond}$(h,N,W)$;
                     \IF{$cd\ne \emptyset$}
                            \STATE {\bf Return} (${\rm false},h',B,N',W'$)
                     \ENDIF
               \ENDWHILE
           \ENDIF
\ENDIF
\STATE {\bf Return} (${\rm true},h,B,N,W$)
\end{algorithmic}
\end{subalgorithm1}

\begin{theorem}
  Algorithm {\sc ComInvSys} terminates in finitely many steps, and computes a minimal {\rm CIS} for its input ideal.
\end{theorem}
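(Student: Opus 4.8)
The plan is to prove the two assertions---termination and correctness---separately, reducing each to properties already established for the two constituent algorithms ({\sc GBI} from \cite{gerdtnew} and {\sc DisPGB} from \cite{monts1}), glued by a single stability-under-specialization lemma.

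For termination, I would first observe that along any branch on which no further splitting occurs (every leading coefficient encountered being decidable with respect to the current $(N,W)$), the computation reduces to a run of {\sc GBI} over the residue structure determined by $(N,W)$; this terminates by the argument of \cite{gerdtnew}, since the added bookkeeping (ancestors, the lists $NM(p)$, and the specialization data) does not affect the monomial-theoretic termination measure---the ascending chain of leading-term ideals $\li \LM(\poly(T))\ri$ stabilizes by Dickson's lemma. It then remains to bound the branching tree. I would model the recursion of {\sc Branch}, including the branches triggered from within {\sc GBI} through the {\sc NewCond} call inside {\sc HeadNormalForm}, as a binary tree whose edges each adjoin a new condition $cd$. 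Edges are of two kinds: a \emph{null edge} adjoins $cd$ to $N$, and a \emph{nonnull edge} adjoins $cd$ to $W$. Since {\sc NewCond} branches only on a factor $cd\notin \sqrt{\li N\ri}$, every null edge strictly enlarges $\sqrt{\li N\ri}$; by the ascending chain condition in $K[{\bf a}]$ only finitely many null edges occur along any root-to-leaf path. Between two consecutive null edges the ideal $\li N\ri$ is fixed, so the intervening involutive computation is a single terminating {\sc GBI} run that generates only finitely many polynomials and hence only finitely many candidate leading coefficients; thus only finitely many nonnull edges can occur as well. Every path being finite and the tree finitely branching, K\"onig's lemma yields finiteness of the whole tree, whence {\sc ComInvSys} halts.

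For correctness, fix a leaf $(B,N,W)$ that the algorithm stores in {\tt List} and a homomorphism $\sigma$ compatible with $(N,W)$. The crux is the lemma that \emph{specialization commutes with the computation}: because every polynomial $p$ processed by {\sc GBI} along this branch has had its leading coefficient certified nonzero modulo the conditions (the null conditions in $N$ kill exactly the terms stripped off by {\sc NewCond}, while $\sigma(q)\ne 0$ for $q\in W$), one has $\sigma(\LT(\poly(p)))=\LT(\sigma(\poly(p)))$ throughout. Granting this, each involutive head and tail reduction, each application of the criteria of Proposition \ref{crit}, and each nonmultiplicative prolongation performed symbolically maps under $\sigma$ to the identical step of {\sc GBI} executed directly on $\sigma(F)$ over $K'$. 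Consequently $\sigma(B)$ is exactly the output that the field-coefficient involutive algorithm would produce for $\sigma(\li F\ri)$; by Theorem \ref{alg_ib} it is an $\L$-involutive basis of $\sigma(\li F\ri)$, and since {\sc GBI} returns the monic $\L$-autoreduced basis it is minimal in the sense of Definition \ref{def_ib}. To see that the stored leaves form a \emph{comprehensive} system, I would note that {\sc CanSpec} discards only incompatible specifications and that each split into $\LC(f)=0$ and $\LC(f)\ne 0$ is exhaustive and mutually exclusive; hence every compatible $\sigma$ satisfies the conditions of exactly one leaf, so the parameter space is covered.

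The step I expect to be the main obstacle is the stability-under-specialization lemma $\sigma(\LT(\poly(p)))=\LT(\sigma(\poly(p)))$ and, with it, the claim that the symbolic application of the involutive criteria $(C_1),(C_2)$ remains valid after specialization: one must check that the ancestor relationships recorded in the triples, on which the criteria depend, are themselves preserved by $\sigma$, so that no reduction legitimately skipped symbolically becomes necessary over $K'$. Securing this invariant along the interleaving of {\sc GBI} and {\sc Branch}---rather than for a single monolithic Gr\"obner computation, as in \cite{monts1}---is where the argument needs the most care.
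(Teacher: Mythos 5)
Your proof is correct and shares the paper's overall architecture---termination from Noetherianity of $K[{\bf a}]$ together with termination of {\sc GBI}, and correctness from the invariant that every leading coefficient examined along a stored branch is decidable, so that specialization commutes with the reductions---but it departs from the paper at two substantive points, and at both your version is the more careful one. For termination, the paper argues that each branching strictly enlarges $\langle N\rangle$ on the null side and $\langle W\rangle$ on the nonnull side and then invokes the ascending chain condition (following \cite{monts1}, Theorem 16); the claim for $\langle W\rangle$ is shaky, since the new condition $cd$ is only guaranteed to be an irreducible factor outside the \emph{set} $W$ and outside $\sqrt{\langle N\rangle}$, and may perfectly well lie in the \emph{ideal} $\langle W\rangle$ (e.g.\ $cd=a+b$ with $a,b\in W$), so $\langle W\rangle$ need not grow. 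Your argument avoids this defect: the ascending chain condition is applied only to the radicals $\sqrt{\langle N\rangle}$ along null edges (and there the strict growth is genuine, since a factor of a leading coefficient not in $\sqrt{\langle N\rangle}$ cannot itself lie in $\sqrt{\langle N\rangle}$), the nonnull edges between two consecutive null edges are bounded by the finitely many polynomials, hence finitely many leading-coefficient factors, produced by the {\sc GBI} run with $N$ fixed, and K\"onig's lemma then finishes the finitely-branching tree; this is a complete argument where the paper's is not quite. For correctness, the paper checks the criterion of Theorem \ref{blin} locally and by contradiction---each nonmultiplicative prolongation $xf$ with $f\in G_i$ was enqueued in $Q$ and reduced inside {\sc GBI}, so its normal form, were it nonzero, would have been adjoined to $T=G_i$---whereas you prove the stronger global statement that $\sigma$ maps the entire symbolic run onto the run of the involutive algorithm on $\sigma(F)$ over $K'$; both hinge on the same stability lemma $\sigma(\LT(h))=\LT(\sigma(h))$ for all processed polynomials, together with the preservation under $\sigma$ of ancestors and of the criteria $(C_1),(C_2)$ of Proposition \ref{crit}, which you rightly single out as the delicate point and which the paper glosses over to the same extent. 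The global simulation buys you minimality of $\sigma(G_i)$ directly, while the paper defers minimality to the structure of {\sc GBI} \cite{gerdtnew}; conversely, the paper's local check is lighter, needing only the final basis and not a step-by-step correspondence of selection choices. Your closing observation that every compatible $\sigma$ falls under exactly one stored leaf is not actually required by the paper's definition of a CIS, but it is what makes the output a decomposition of the parameter space, and the paper's proof omits it.
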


\begin{proof}
Let $I=\li F \ri$ where $F=\{f_1,\ldots, f_k\}\subset K[{\bf a},{\bf x}]$ is a parametric set, ${\bf x}=x_1,\ldots,x_n$ (resp. ${\bf a}=a_1,\ldots,a_m$) is a sequence of variables (resp. parameters).  Let $\prec_{{\bf x}}$ (resp. $\prec_{{\bf a}}$) be a monomial ordering involving the $x_i$'s (resp. $a_i$'s), and $\L$ be an involutive division on $K[{\bf x}]$.

Suppose that {\sc ComInvSys} receives $F$ as an input. To prove the {\em termination}, we use the fact  that $K[{\bf a}]$ is a Noetherian ring. When {\sc Branch} is called, the leading coefficient of some polynomial $f\in I$ is analyzed. For this purpose, the subalgorithm {\sc NewCond} determines whether $\LC(f)$ is decidable or not w.r.t. the given specification $(N,W)$. Two alternative cases can take place:
\begin{itemize}
  \item $\LC(f)$ is decidable and we check the global variable {\tt ind}. Now if {\tt ind}$<k$, then we study the next polynomial in $F$. Otherwise, {\sc GBI}  is called. If all the leading coefficients of the examined polynomials (to compute a minimal involutive basis) are decidable, then the output, say  $G$, is a minimal involutive basis of $I$ w.r.t. $(N,W)$, and we add $(G,N,W)$ to {\tt List}. Otherwise, two new branches are created by calling {\sc Branch} (cf. the second case given below). In doing so, the minimality of $G$ and the termination of its computation is provided by the structure of {\sc GBI}  algorithm (see \cite{gerdtnew}).
   \item $\LC(f)$ is not decidable and we create two branches with $(N,W\cup cd)$ and $(N\cup cd,W)$,
   where $cd$ is the one-element set containing the new condition derived from $\LC(f)$.
\end{itemize}
Thus, in the second case, the branch for which $N$ (resp. $W$) is assumed, increases the ideal $\li N \ri\subset K[{\bf a}]$ (resp. $\li W \ri  \subset K[{\bf a}]$). Note that we replace $N$ by a Gr\"obner basis of its ideal (see line 4 in {\sc NewCond}). Since the ascending chains of ideals stabilize, the algorithm terminates. This argument was inspired by the proof in \cite{monts1}, Theorem 16.

To prove the {\em correctness}, assume that $M=\{(G_i,N_i,W_i)\}_{i=1}^{\ell}$ is the output of {\sc ComInvSys} for the input is $F$ (note that we have used the fact the this algorithm terminates in finitely many steps). Consider integer $1 \le i \le \ell$ homomorphism  $\sigma:K[{\bf a}]\rightarrow K^\prime$ where $(N_i, W_i)$ is a specification of $\sigma$ and  $K'$ is a field extension of $K$.

We have to show that for each $f\in G_i$ and each $x\in
NM_\L(\LM(\sigma(f)),\LM(\sigma(G_i)))$, in accordance with
Theorem \ref{blin}, the equality
$\nf_\L(\sigma(xf),\sigma(G_i))=0$ holds. By using `reductio ad
absurdum', suppose $g=\nf_\L(\sigma(xf),\sigma(G_i))$ and $g\ne
0$. Since $(G_i,N_i,W_i)$ has been added to {\tt List} in {\sc
Branch}, the leading coefficients of the polynomials in the
subalgorithm {\sc GBI}, examined at computation of a minimal
involutive basis for $F$, are decidable w.r.t. $(N_i, W_i)$.
Furthermore, $f\in G_i$ implies that in the course of GBI $xf$ is
added to $Q$, the set of all nonmultiplicative prolongations that
must be examined (see the notations used in {\sc GBI}). Then, {\sc
HeadReduce} is called to perform the $\L$-head reduction of the
elements of $Q$ modulo the last computed basis $T\subset G_i$. The
computed $\L$-head normal form of $xf$ is further reduced by
invoking {\sc TailNormalForm} which  performs the $\L$-tail
reduction. By the above notations, $g$ is the result of this step.
Thus, $g$ should be added to $T\subset G_i$. It follows that
$\nf_\L(\sigma(xf),\sigma(G_i))=0$, a contradiction, and this
completes the proof. $\Box$
\end{proof}

\section{Example}
\label{ex}

Now we give an example to illustrate the step by step construction of a minimal CIS by the algorithm {\sc ComInvSys} proposed and described in the previous section\,\footnote{The {\tt Maple} code of our implementation of the algorithm for the Janet division~\cite{gerdt0} is available at {\tt http://invo.jinr.ru} and {\tt http://amirhashemi.iut.ac.ir/software.html}}.

For the input $F=\{ax^2,by^2\}\subset \mathbb{K}[a,b,x,y]$ from Example \ref{ex2}, Janet division and the lexicographic monomial ordering with $b\prec_{\lex} a$ and $y\prec_{\lex}x$ the algorithm performs as follows:\\
\\
$\rightarrow ${\sc ComInvSys}$(F,\L,\prec_{lex},\prec_{lex})$\\[0.06cm]
$ \hspace*{0.4cm} {\rm {\tt List}}:=Null;\ {\rm {\tt ind}}:=1;\ k:=2;$\\[0.06cm]
$\hspace*{0.4cm} B:=\{[ax^2,ax^2,\emptyset],[by^2,by^2,\emptyset]\}$\\[0.06cm]
$\hspace*{0.4cm} \rightarrow ${\sc Branch}$([ax^2,ax^2,\emptyset],B,\{~\},\{~\},\{~\})$\\
$\hspace*{0.8cm}  \rightarrow ${\sc NewCond}$(ax^2,\{~\},\{~\})=(\{a\},\{~\},\{~\})$\\
$\hspace*{0.8cm} \rightarrow  ${\sc Branch}$([ax^2,ax^2,\emptyset],B,\{~\},\{a\},\{~\})$\\
$\hspace*{1.2cm}  \rightarrow ${\sc NewCond}$(ax^2,\{~\},\{a\})=(\{~\},\{~\},\{a\})$\\[0.06cm]
$ \hspace*{0.4cm} G:=\{([ax^2,ax^2,\emptyset],B,\{~\},\{a\},\{~\})\}$\\
$\hspace*{0.8cm} \rightarrow  ${\sc Branch}$([ax^2,ax^2,\emptyset],B,\{a\},\{~\},\{~\})$\\
$\hspace*{1.2cm}  \rightarrow ${\sc NewCond}$(ax^2,\{a\},\{~\})=(\{~\},\{a\},\{~\})$\\[0.06cm]
$ \hspace*{0.4cm} G:=\left\{{\big(}[ax^2,ax^2,\emptyset],B,\{~\},\{a\},\{~\}{\big)},{\big(}[ax^2,ax^2,\emptyset],\{[0,0,\emptyset],[by^2,by^2,\emptyset]\},\{a\},\{~\},\{~\}{\big)}\right\}$\\[0.06cm]
$ \hspace*{0.4cm} {\rm {\tt ind}}:=2;$\\[0.06cm]
$ \hspace*{0.4cm} A={\big(}[ax^2,ax^2,\emptyset],B,\{~\},\{a\},\{~\}{\big)}$\\[0.06cm]
$\hspace*{0.4cm} \rightarrow ${\sc Branch}$([by^2,by^2,\emptyset],B,\{~\},\{a\},\{~\})$\\
$\hspace*{0.8cm}  \rightarrow ${\sc NewCond}$(by^2,\{~\},\{a\})=(\{b\},\{~\},\{~\})$\\
$\hspace*{0.8cm} \rightarrow  ${\sc Branch}$([by^2,by^2,\emptyset],B,\{~\},\{a,b\},\{~\})$\\[0.06cm]
\hspace*{1.0cm}  (* further {\sc Branch}$([by^2,by^2,\emptyset],B,\{b\},\{a\},\{~\})$ is executed*)\\[0.06cm]
$\hspace*{1.2cm}  \rightarrow ${\sc NewCond}$(by^2,\{~\},\{a,b\})=(\{~\},\{~\},\{a,b\})$\\[0.06cm]
$\hspace*{1.2cm}{\rm {\tt ind}}\geq k=2$\\
$\hspace*{1.2cm} cd=\{~\}$\\[0.06cm]
$\hspace*{1.2cm}  \rightarrow ${\sc GBI} $(B,\{~\},\{a,b\},\{~\})$\\[0.06cm]
$\hspace*{1.6cm} T:=\{[by^2,by^2,\emptyset]\}$\\
$\hspace*{1.6cm} Q:=\{[ax^2,ax^2,\emptyset]\}$\\[0.06cm]
$\hspace*{1.6cm}  \rightarrow ${\sc HeadReduce}$(T,\{~\},\{a,b\},Q)$\\
$\hspace*{2cm}  \rightarrow ${\sc HeadNormalForm}$([ax^2,ax^2,\emptyset],T,\{~\},\{a,b\})=(true,ax^2,T,\{~\},\{a,b\})$\\[0.06cm]
 $\hspace*{1.6cm}${\sc HeadReduce} returns $(true,0,T,\{~\},\{a,b\},Q)$\\[0.06cm]
$\hspace*{1.6cm} p:=[ax^2,ax^2,\emptyset]$\\
$\hspace*{1.6cm} Q=\{~\}$\\[0.06cm]
$\hspace*{2cm}  \rightarrow ${\sc TailNormalForm}$(p,T)=ax^2$\\[0.06cm]
$\hspace*{1.6cm} T:=\{[by^2,by^2,\emptyset],[ax^2,ax^2,\emptyset]\}$\\
$\hspace*{1.6cm} Q:=\{[bxy^2,by^2,\emptyset]\}$\\[0.06cm]
$\hspace*{1.6cm}  \rightarrow ${\sc HeadReduce}$(T,\{~\},\{a,b\},Q)=(true,0,T,\{~\},\{a,b\},Q)$\\[0.06cm]
$\hspace*{1.6cm} p:=[bxy^2,by^2,\emptyset]$\\
$\hspace*{1.6cm} Q=\{~\}$\\[0.06cm]
$\hspace*{2cm}  \rightarrow ${\sc TailNormalForm}$(p,T)=bxy^2$\\[0.06cm]
$\hspace*{1.6cm} T:=\{[by^2,by^2,\emptyset],[ax^2,ax^2,\emptyset],[bxy^2,by^2,\emptyset]\}$\\
$\hspace*{1.6cm} Q:=\{[bx^2y^2,by^2,\emptyset]\}$\\[0.06cm]
$\hspace*{1.6cm}  \rightarrow ${\sc HeadReduce}$(T,\{~\},\{a,b\},Q)=(true,0,T,\{~\},\{a,b\},\{~\})$\\[0.06cm]
$\hspace*{1.6cm} Q:=\{~\}$\\[0.06cm]
$\hspace*{1.2cm}  \rightarrow ${\sc GBI}  returns $(true,0,\{by^2,ax^2,bxy^2\},\{~\},\{a,b\})$\\[0.06cm]
$\hspace*{1.2cm}  {\rm {\tt List}}:=(\{by^2,ax^2,bxy^2\},\{~\},\{a,b\})$\\[0.06cm]
$ \hspace*{0.4cm} B=\{[ax^2,ax^2,\emptyset],[0,0,\emptyset]\}$\\[0.06cm]
$\hspace*{0.8cm} \rightarrow  ${\sc Branch}$([by^2,by^2,\emptyset],B,\{b\},\{a\},\{~\})$\\
$\hspace*{1.2cm}  \rightarrow ${\sc NewCond}$(by^2,\{b\},\{a\})=(\{~\},\{b\},\{a\})$\\[0.06cm]
$\hspace*{1.2cm}{\rm {\tt ind}}\geq k=2$\\
$\hspace*{1.2cm} cd=\{~\}$\\[0.06cm]
$\hspace*{1.2cm}  \rightarrow ${\sc GBI} $(B,\{b\},\{a\},\{~\})=(true,0,\{ax^2\},\{b\},\{a\})$\\[0.06cm]
$\hspace*{1.2cm}  {\rm {\tt List}}:=(\{by^2,ax^2,bxy^2\},\{~\},\{a,b\}),(\{ax^2\},\{b\},\{a\})$\\[0.06cm]
\hspace*{1.2cm} (* Return back to {\sc ComInvSys} *)\\[0.06cm]
$ \hspace*{0.4cm} A={\big(}[ax^2,ax^2,\emptyset],\{[0,0,\emptyset],[by^2,by^2,\emptyset]\},\{a\},\{~\},\{~\}{\big)}$\\[0.06cm]
$ \hspace*{0.4cm} B=\{[0,0,\emptyset],[by^2,by^2,\emptyset]\}$\\[0.06cm]
$\hspace*{0.4cm} \rightarrow ${\sc Branch}$([by^2,by^2,\emptyset],B,\{~\},\{a\},\{~\})$\\
$\hspace*{0.8cm}  \rightarrow ${\sc NewCond}$(by^2,\{a\},\{~\})=(\{b\},\{~\},\{~\})$\\
$\hspace*{0.8cm} \rightarrow  ${\sc Branch}$([by^2,by^2,\emptyset],B,\{a\},\{b\},\{~\})$\\
\hspace*{0.8cm} (* further {\sc Branch}$([by^2,by^2,\emptyset],B,\{a,b\},\{~\},\{~\})$ is executed *)\\[0.06cm]
$\hspace*{1.2cm}  \rightarrow ${\sc NewCond}$(by^2,\{a\},\{b\})=(\{~\},\{a\},\{b\})$\\[0.06cm]
$\hspace*{1.2cm}{\rm {\tt ind}}\geq k=2$\\
$\hspace*{1.2cm} cd=\{~\}$\\[0.06cm]
$\hspace*{1.2cm}  \rightarrow ${\sc GBI} $(B,\{a\},\{b\},\{~\})=(true,0,\{by^2\},\{a\},\{b\})$\\[0.06cm]
$\hspace*{1.2cm}  {\rm {\tt List}}:=(\{by^2,ax^2,bxy^2\},\{~\},\{a,b\}),(\{ax^2\},\{b\},\{a\}),(\{by^2\},\{a\},\{b\})$\\[0.06cm]
$ \hspace*{0.4cm} B=\{[0,0,\emptyset],[0,0,\emptyset]\}$\\
$\hspace*{0.8cm} \rightarrow  ${\sc Branch}$([by^2,by^2,\emptyset],B,\{a,b\},\{~\},\{~\})$\\
$\hspace*{1.2cm}  \rightarrow ${\sc NewCond}$(by^2,\{a,b\},\{~\})=(\{~\},\{a,b\},\{~\})$\\[0.06cm]
$\hspace*{1.2cm}{\rm {\tt ind}}\geq k=2$\\
$\hspace*{1.2cm} cd=\{~\}$\\[0.06cm]
$\hspace*{1.2cm}  \rightarrow ${\sc GBI} $(B,\{a,b\},\{~\},\{~\})=(true,0,\{0\},\{a,b\},\{~\})$\\[0.06cm]
${\rm {\tt List}}:=(\{by^2,ax^2,bxy^2\},\{~\},\{a,b\}),(\{ax^2\},\{b\},\{a\}),(\{by^2\},\{a\},\{b\}),(\{0\},\{a,b\},\{~\})$.\\

\section*{Acknowledgements}
The main part of research presented in the paper was done during
the stay of the second author (A.H.) at the Joint Institute for
Nuclear Research in Dubna, Russia. He would like to thank the
first author (V.G.) for the invitation, hospitality, and support.
The contribution of the first author was partially supported by
grants 01-01-00200, 12-07-00294 from the Russian Foundation for
Basic Research and by grant 3802.2012.2 from the Ministry of
Education and Science of the Russian Federation.

\bibliographystyle{alpha}

\end{document}